\documentclass[10pt,a4paper]{article}

\usepackage{graphicx,amsmath,amssymb,fixmath,amsthm}

\newcommand{\Ord}[1]{ {O}( #1 )}
\newcommand{\R}{{\Bbb R}}

\newcommand{\N}{{\Bbb N}}

\newcommand{\D}[2]{ \ensuremath{ \frac{d #1 }{d #2 } }}

\newcommand{\vect}[1]{\ensuremath{ \mathbold #1 } }

\newtheorem{theorem}{Theorem}[section]
\newtheorem{lemma}[theorem]{Lemma}

\theoremstyle{definition}
\newtheorem{definition}[theorem]{Definition}

\theoremstyle{remark}

\newtheorem{proposition}[theorem]{Proposition}

\addtolength{\hoffset}{-2truecm} \addtolength{\textwidth}{4truecm}

\date{}

\begin{document}

\title{Tropicalization and tropical equilibration of chemical reactions}

\author{V.~Noel$^1$, D.~Grigoriev$^2$, S.~Vakulenko$^3$ and O.~Radulescu$^4$ \\ \\
 \small  $^1$ IRMAR UMR 6625, University of Rennes 1, Rennes, France, \\
 \small  $^2$ CNRS, Math\'ematiques, Universit\'e de Lille, 59655, Villeneuve d'Ascq, France, \\
\small  $^3$ Institute of Problems of Mechanical Engineering, St.Petersburg, Russia, \\
 \small  $^4$ DIMNP UMR CNRS 5235, University of Montpellier 2, Montpellier, France.}



\maketitle

\centerline{\bf Abstract}

Systems biology uses large networks of biochemical reactions to model the
functioning of biological cells from the molecular to the cellular scale.
The dynamics of dissipative reaction networks with many well separated time
scales can be described as a sequence of successive equilibrations of
different subsets of variables of the system. Polynomial systems with
separation are equilibrated when at least two monomials, of opposite signs,
have the same order of magnitude and dominate the others. These equilibrations
and the corresponding truncated dynamics, obtained by eliminating the dominated
terms, find a natural formulation in tropical analysis and can be used for
model reduction.

{\bf Keywords:}  Tropical analysis, asymptotic analysis, chemical kinetics, systems biology.

{\bf AMS subjects:} Primary 14T05,  92C40, 92C42; Secondary 14M25.

\section{Introduction.}

Systems biology develops biochemical dynamic models of various cellular processes
such as signalling, metabolism, gene regulation. These models can reproduce complex
spatial and temporal dynamic behavior observed in molecular biology experiments.
The dynamics of multiscale, dissipative, large biochemical models,
can be reduced to that of simpler models, that were called dominant
subsystems \cite{radulescu2008robust,gorban2009asymptotology,gorban-dynamic}.
Simplified, dominant subsystems contain less parameters and are more easy to analyze.

The notion of dominance is asymptotic and a natural mathematical framework to capture
multiple asymptotic relations is the tropical analysis.
Motivated by applications
in mathematical physics \cite{litvinov1996idempotent},
systems of polynomial equations \cite{sturmfels2002solving},
etc.,  tropical analysis uses a change of
scale to transform nonlinear systems into piecewise
linear systems.

In this paper we provide some mathematical justifications for possible applications of
the idea of tropicalization to systems biology models.

\section{Tropicalized chemical kinetics}
In chemical kinetics, the reagent concentrations satisfy
ordinary differential equations:
\begin{equation}
\D{x_i}{t} = F_i (\vect{x}), \, 1 \leq i \leq n.
\label{mainsystem}
\end{equation}
Rather generally, the rates are rational functions of the concentrations and read
\begin{equation}
F_i (\vect{x})= P_i (\vect{x})/Q_i(\vect{x}),
\label{rational}
\end{equation}
where
$P_i(\vect{x}) = \sum_{\vect{\alpha} \in A_i} a_{i,\vect{\alpha}} \vect{x}^\vect{\alpha}$,
$Q_i(\vect{x}) = \sum_{\vect{\beta} \in B_i} b_{i,\beta} \vect{x}^\vect{\beta}$,
are multivariate polynomials. Here
$\vect{x}^\vect{\alpha} = x_1^{{\alpha}_1} x_2^{{\alpha}_2} \ldots x_n^{{\alpha}_n}$,
$\vect{x}^\vect{\beta} = x_1^{\beta_1} x_2^{\beta_2} \ldots x_n^{\beta_n}$, $a_{i,\vect{\alpha}}, b_{i,\vect{\beta}}$, are
nonzero real numbers, and $A_i, B_i$ are finite subsets of $\N^n$.

The special case of mass action kinetics is represented by
\begin{equation}
F_i (\vect{x})= P_i^+(\vect{x}) - P_i^-(\vect{x}),
\label{laurent}
\end{equation}
where $P_i^+(\vect{x})$, $P_i^-(\vect{x})$ are positive coefficients polynomials,
$P_i^{\pm}(\vect{x}) =$
$\sum_{\vect{\alpha} \in A_i^\pm} a_{i,\vect{\alpha}}^\pm \vect{x}^\vect{\alpha},$
$a_{i,\vect{\alpha}}^\pm > 0$, and $A_i^\pm$ are finite subsets of $\N^n$.

In multiscale biochemical systems, the various monomials defining reaction rates have
different orders, and at a given time, there is only one or a few dominating terms.
Therefore, it could make sense to replace polynomials with positive real
coefficients  $\sum_{\vect{\alpha} \in A} a_{\vect{\alpha}} \vect{x}^\vect{\alpha}$, by max-plus polynomials
$\exp ( \max_{\vect{\alpha} \in A} (log( a_{\vect{\alpha}}) + < log(\vect{x}), \vect{\alpha} > ))$.

This heuristic can be used to associate a piecewise-smooth model
to the system of rational ODEs \eqref{mainsystem}, in two different ways.

The first method was proposed in \cite{SASB2011} and can be applied to any rational
ODE system defined by \eqref{mainsystem},\eqref{rational}:
\begin{definition}
We call complete tropicalization  of the smooth ODE
system \eqref{mainsystem},\eqref{rational} the following piecewise-smooth system:
\begin{equation}
\D{x_i}{t} = Dom P_i (\vect{x}) / Dom Q_i(\vect{x}),
\label{tcomplete}
\end{equation}
\noindent where $Dom \left( \sum_{\vect{\alpha} \in A_i} a_{i,\vect{\alpha}}  \vect{x}^\vect{\alpha} \right) =
sign(a_{i,\vect{\alpha}_{max}}) \exp (\max_{\vect{\alpha} \in A_i} ( log( |a_{i,\vect{\alpha}}|) + < \vect{u} , \vect{\alpha} > ))$.
Here $\vect{u} = (log x_1,\ldots,log x_n)$, $< \vect{u} , \vect{\alpha} >$ denotes the dot product, and
$a_{i,\vect{\alpha}_{max}}$, $\vect{\alpha}_{max}\in A_i$ denotes the coefficient
of the monomial for which the maximum is attained. In simple words, $Dom$ renders the monomial
of largest absolute value, with its sign.
\end{definition}
The second method,proposed in \cite{savageau2009phenotypes}, applies to the systems \eqref{mainsystem},\eqref{laurent}.
\begin{definition}
We call two terms tropicalization  of the smooth ODE
system \eqref{mainsystem},\eqref{laurent} the following piecewise-smooth system:
\begin{equation}
\D{x_i}{t} = Dom P_i^+ (\vect{x}) - Dom P_i^- (\vect{x}),
\label{2terms}
\end{equation}
\end{definition}
The two-terms tropicalization was used in \cite{savageau2009phenotypes} to analyse the
dependence of steady states on the model parameters. The complete tropicalization
was used for the study of the model dynamics and for the model reduction \cite{SASB2011,radulescu2012red}.

For both tropicalization methods, for each occurrence of the Dom operator, one can introduce
a tropical manifold, defined as the subset of $\R^n$ where the maximum in $Dom$ is attained by at least two
terms. For instance, for $n=2$, such tropical manifold is made of points, segments connecting these points, and half-lines. The tropical manifolds in such an arrangement decompose the space into sectors, inside which one monomial
dominates all the others in the definition of the reagent rates.
The study of this arrangement give hints on the possible steady states and attractors, as well as on their
bifurcations \cite{SASB2011}.




\section{Tropical equilibration and permanence}
In the general case, the tropicalization heuristic is difficult to justify by rigorous estimates.
However, this is possible when the polynomials defining the rhs of the ODE system have dominant monomials,
much larger than the other monomials.

To simplify, let us consider the class of polynomial systems, corresponding
to mass action law chemical kinetics :
\begin{equation}
\frac{dx_i}{dt} = \sum_{j=1}^{M_i} P_{ij} \vect{x}^{\vect{\alpha}_{ij}}
\label{chem}
\end{equation}
\noindent
where $\vect{\alpha}_{ij}$ are multi-indices, $P_{ij}$ are rate constants.

In order to introduce orders, we consider that coefficients $P_{ij}$ are integer powers of a small positive parameter $\epsilon$:
\begin{equation}
P_{ij}(\epsilon)=\epsilon^{\gamma_{ij}} \bar P_{ij}.
\label{eps}
\end{equation}
where $\bar P_{ij}$ do not depend  or are $\Ord{1}$ on $\epsilon$.

We also suppose that the cone ${\bf R}_{> }=\{x: x_i  \ge 0 \}$ is invariant under dynamics (\ref{chem})
and initial data are positive:
$$
x_i(0) > \delta > 0.
$$
The terms (\ref{eps}) can have different signs, the ones with $\bar P_{ij} >0$ are production terms,
and those with $\bar P_{ij} <0$ are degradation terms.

From the biochemical point of view, the choice (\ref{eps}) is justified by the
fact that biochemical processes have many, well separated concentration and time
scales. The orders of different monomials defining the system \eqref{chem} are set
by orders of the parameters but also  by the orders of the concentrations
variables $x_i$. We therefore use a renormalization :
\begin{equation}
  x_i=\epsilon^{a_i} \bar x_i.
\label{renorm}
\end{equation}
where $a_i$ are unknown powers chosen such that $\bar x_i$ are bounded uniformly in $\epsilon$
(we will see later when this choice is possible).


We seek for renormalization exponents $a_i$ such that only a few terms dominate all the others,
for each $i$-th equation \eqref{chem} as $\epsilon \to 0$.  Let us denote the number of terms with minimum
degree in $\epsilon$ for $i$-th equation as $m_i$. Naturally, $1 \le m_i \le M_i$.
After renormalization, we remove all small terms that have smaller orders in $\epsilon$  as $\epsilon \to 0$.
We can call this procedure {\em tropical removing}. The system obtained can be named {\em tropically truncated system}.

Let us denote $\alpha_l^{ij}$ the $l^{th}$ coefficient of the multi-index $\vect{\alpha}_{ij}$.
If all $m_i=1$ then we have the following truncated system
\begin{equation}
\frac{d\bar x_i}{dt} =\epsilon^{\mu_i} F_{i}(\bar {\bf x}), \quad F_{i}(\bar {\bf x})=P_{ij(i)} \bar {\bf x}^{\vect{\alpha}^{ij(i)}},
\label{chemshort}
\end{equation}
where $j(i)$ is the index of the unique term with minimum degree in $\epsilon$,
\begin{equation}
\mu_i=\gamma_{i j(i)} + \sum_{l=1}^n \alpha_l^{ij(i)} a_l - a_i,
\label{mu1a}
\end{equation}
and
\begin{equation}
\mu_i < \gamma_{i j} + \sum_{l=1}^n \alpha_l^{ij} a_l - a_i \quad for \ all \  j \ne j(i).
\label{mu2}
\end{equation}
 If all $m_i=2$, in order to find possible renormalization exponents $a_i$,  it is necessary to resolve a family of linear programming problem. Each problem is defined by a set of pairs $(j(i), k(i))$ such that $j(i) \ne k(i)$. We define $\mu_i$ by
\begin{equation}
\mu_i=\gamma_{i j(i)} + \sum_{l=1}^n \alpha_l^{ij(i)} a_l - a_i=\gamma_{i k(i)} + \sum_{l=1}^n \alpha_l^{ik(i)} a_l -a_i
\label{mu11}
\end{equation}
and obtain the system of the following inequalities
\begin{equation}
\mu_i \le \gamma_{ij} + \sum_{l=1}^n \alpha_l^{ij} a_l -a_i \quad for \ all  \ j \ne j(i), k(i).
\label{mu21}
\end{equation}

In order to define more precisely the separation between various terms,
we use the permanency concept, borrowed from ecology
(the Lotka-Volterra model, see for instance \cite{y1996global}).
\begin{definition}
The system \eqref{chem} is permanent, if there are two constants $C_{-} >0$ and $C_{+} > 0$,
a set of renormalization exponents $a_i$, and
 a function $T_0$, such that after the renormalized variables \eqref{renorm} satisfy
\begin{equation}
 C_{-} < \bar x_i(t) < C_{+},   \quad for \ all \ t > T_0 ( x(0)) \ and \ for \ every \ i.
\label{perm}
\end{equation}
We assume that $C_{\pm}$ and $T_0$ are uniform in (do not depend on) $\epsilon$ as $\epsilon \to 0$.
\end{definition}

For permanent systems, we can obtain some results justifying the two
procedures of tropicalization.
\begin{proposition}
\label{comparison}
Assume that system \eqref{chem} is permanent.
Let $x$, $\hat x$ be the solutions to the
Cauchy problem for (\ref{chem}) and
(\ref{tcomplete}) (or (\ref{2terms})), respectively, with the same initial
data:
$$
  x(0)=\hat x(0).
$$
Then the difference $y(t)=x(t) - \hat x(t)$ satisfies  the estimate
\begin{equation}
     | y(t) |  < C_1 \epsilon^{\gamma} \exp(bt), \quad \gamma > 0,
\label{estdif}
\end{equation}
 where the positive constants $C_1, b$ are uniform in $\epsilon$.
If the original system (\ref{chem}) is structurally stable in the domain
$\Omega_{C_{-}, C_{+}}=\{x: C_{-} < |x| < C_{+} \}$,
then the corresponding tropical systems (\ref{tcomplete}) and (\ref{2terms})
are also permanent and there is an orbital topological equivalence
$\bar x = h_{\epsilon}(x)$ between the trajectories $x(t)$ and $\bar x(t)$
of the corresponding Cauchy problems. The homeomorphism $h_{\epsilon}$ is close to the identity as $\epsilon \to 0$.
\end{proposition}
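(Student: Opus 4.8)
The plan is to read the tropicalization as a perturbation of \eqref{chem} whose size is controlled by permanence, to prove the difference estimate \eqref{estdif} by a Gronwall argument, and then to obtain permanence and orbital equivalence of the tropical system from the structural stability hypothesis.

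\textbf{Size of the discarded terms.} First I would insert the renormalization \eqref{renorm} into \eqref{chem}: the $j$-th monomial of the $i$-th equation becomes $\epsilon^{\mu_{ij}}\bar P_{ij}\bar x^{\vect{\alpha}_{ij}}$ with $\mu_{ij}=\gamma_{ij}+\sum_l\alpha_l^{ij}a_l-a_i$, so that $\frac{d\bar x_i}{dt}=\sum_j\epsilon^{\mu_{ij}}\bar P_{ij}\bar x^{\vect{\alpha}_{ij}}$. By the definition of $Dom$, the tropical truncation keeps exactly the monomials attaining the minimal exponent $\mu_i=\min_j\mu_{ij}$ of \eqref{mu1a} and thereby defines the tropical right-hand side $\hat F$ of \eqref{tcomplete} or \eqref{2terms}; every discarded monomial satisfies $\mu_{ij}\ge\mu_i+\delta_i$ for a strictly positive gap $\delta_i$. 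Permanence \eqref{perm} confines each $\bar x_i$ to $[C_-,C_+]$, so the factors $\bar x^{\vect{\alpha}_{ij}}$ are bounded uniformly in $\epsilon$ and the discarded part of the $i$-th equation is smaller than the kept part by a factor $\Ord{\epsilon^{\delta_i}}$. Consequently $|F(x)-\hat F(x)|=\Ord{\epsilon^{\gamma}}$ on the permanence domain, for a positive exponent $\gamma$ fixed by the gaps $\delta_i$ together with the renormalization exponents, the implied constant being uniform in $\epsilon$ because it involves only the bounded factors $\bar P_{ij}$ and $\bar x^{\vect{\alpha}_{ij}}$.

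\textbf{Gronwall comparison.} For $y=x-\hat x$ with $y(0)=0$ I would write $\dot y=(F(x)-\hat F(x))+(\hat F(x)-\hat F(\hat x))$. The first group is $\Ord{\epsilon^{\gamma}}$ by the previous step; the second is bounded by $L|y|$, where $L$ is a Lipschitz constant of the piecewise-smooth field $\hat F$ on $\Omega_{C_-,C_+}$, kept uniform in $\epsilon$ by the permanence bounds on the retained monomials. Integrating $\frac{d}{dt}|y|\le C_1\epsilon^{\gamma}+L|y|$ from $y(0)=0$ gives $|y(t)|\le\frac{C_1}{L}\epsilon^{\gamma}(e^{Lt}-1)$, which is \eqref{estdif} with $b=L$. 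A bootstrap accompanies this step: the Lipschitz bound and the error estimate are available only while $\hat x(t)$ stays inside $\Omega_{C_-,C_+}$, but as long as the bound on $|y|$ holds $\hat x$ remains $\Ord{\epsilon^{\gamma}}$-close to the permanent trajectory $x$ and hence cannot exit the domain before the estimate itself keeps it inside. For the two-terms tropicalization \eqref{2terms} the field $Dom\,P_i^+-Dom\,P_i^-$ is continuous and piecewise smooth, so $L$ exists outright; for the complete tropicalization \eqref{tcomplete} both the existence of $\hat x$ and this step have to be read in the Filippov sense, as explained below.

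\textbf{Permanence, orbital equivalence, and the main obstacle.} For the second assertion I would treat $\hat F=F+(\hat F-F)$ as a perturbation of $F$ of size $\Ord{\epsilon^{\gamma}}$ on $\Omega_{C_-,C_+}$; structural stability of \eqref{chem} there then yields a homeomorphism $h_\epsilon$ sending orbits of $F$ to orbits of $\hat F$ with their orientation and with $h_\epsilon\to\mathrm{id}$ as $\epsilon^{\gamma}\to0$, whence the tropical system is permanent because $h_\epsilon$ maps the trapping region of the permanent flow onto one converging to it. The genuine difficulty is that $\hat F$ is not a $C^1$-small perturbation of $F$: the complete tropicalization \eqref{tcomplete} is generically discontinuous across a tropical manifold whenever the two competing dominant monomials carry opposite signs (the $Dom$ operator flips sign there), whereas the classical structural-stability theorems presuppose smooth perturbations. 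I expect the real work to consist either in (i) defining solutions in the Filippov sense and analysing the sliding modes produced on the tropical manifolds, or (ii) mollifying $\hat F$ on an $\Ord{\epsilon^{\gamma}}$-neighborhood of those manifolds and checking that the regularization does not change the topological type; the two-terms case is milder, its field being continuous, so that only the gap between the $C^0$-closeness we have established and the $C^1$-closeness demanded by structural stability must be bridged.
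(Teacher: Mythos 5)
Your proof takes essentially the same route as the paper's: a Gronwall argument for \eqref{estdif} and an appeal to the definition of structural stability for the second assertion --- indeed the paper's entire proof consists of exactly those two remarks, so your proposal matches it in approach. The additional material you supply (the $\Ord{\epsilon^{\gamma}}$ bound on the discarded monomials via permanence, the bootstrap keeping $\hat x$ in $\Omega_{C_-,C_+}$, and especially the observation that the complete tropicalization is discontinuous across tropical manifolds where dominant monomials have opposite signs, so that Filippov solutions or mollification and a bridge from $C^0$- to $C^1$-closeness are needed) fills in genuine gaps that the paper's two-sentence proof passes over in silence.
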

The proof of the estimate (\ref{estdif}) follows immediately by the Gronwall lemma.
The second assertion follows directly
from the definition of  structural stability which means that orbits of the dynamical system
are smoothly deformed under small perturbations.

Permanency property is not easy to check.
The following straightforward lemma gives a necessary condition
of permanency of the system \eqref{chem}.
\begin{lemma}
{Assume a tropically truncated system is permanent.
Then, for each $i \in \{1,\ldots, n \}$, the $i$-th equation
of this system contains at least two terms. The terms should
have different signs for coefficients $p_{ij}$, i.e., one term
should be a production one, while  another term should be
a degradation term.}
\label{lemmaequil}
\end{lemma}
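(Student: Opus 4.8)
The plan is to argue by contradiction, using the fact that after tropical truncation the retained monomials in the $i$-th equation all carry the same power $\epsilon^{\mu_i}$, so their common prefactor cannot change the sign of the right-hand side. Suppose the conclusion fails for some index $i$: all the dominant terms kept in the $i$-th equation have coefficients $\bar P_{ij}$ of a single sign. (This one hypothesis subsumes the single-term case $m_i=1$ and the case of several terms of equal sign.) Denoting by $D_i$ the set of indices of these dominant terms, the truncated equation reads
\[
\frac{d\bar x_i}{dt} = \epsilon^{\mu_i} \sum_{j \in D_i} \bar P_{ij}\, \bar{\bf x}^{\vect{\alpha}_{ij}}.
\]
I will show this is incompatible with permanency \eqref{perm}.

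First I would record a uniform lower bound on each monomial. Assuming the truncated system is permanent, $C_- < \bar x_l(t) < C_+$ for all $l$ and all $t>T_0$; since every exponent $\alpha_l^{ij}$ is a nonnegative integer, $t\mapsto t^{\alpha_l^{ij}}$ is nondecreasing on $(0,\infty)$, whence $\bar{\bf x}^{\vect{\alpha}_{ij}} = \prod_l \bar x_l^{\alpha_l^{ij}} \ge C_-^{\sum_l \alpha_l^{ij}} > 0$, a bound independent of $\epsilon$. Consequently, if all $\bar P_{ij}$ with $j\in D_i$ are positive, the bracketed sum is bounded below by a constant $c>0$, so $d\bar x_i/dt \ge c\,\epsilon^{\mu_i} > 0$ for every $t>T_0$; integrating gives $\bar x_i(t) \ge \bar x_i(T_0) + c\,\epsilon^{\mu_i}(t-T_0)$, which exceeds $C_+$ in finite time and contradicts $\bar x_i(t)<C_+$. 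The all-negative case is symmetric: $d\bar x_i/dt$ is then bounded above by a negative constant and $\bar x_i$ eventually falls below $C_-$.

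This contradiction shows that for each $i$ the dominant terms cannot share a common sign, so there must be at least one production term ($\bar P_{ij}>0$) and at least one degradation term ($\bar P_{ij}<0$); in particular $m_i\ge 2$, which is the assertion. The step needing the most care is the interpretation of permanency rather than any estimate: \eqref{perm} is required to hold for \emph{all} $t>T_0$, so however small the fixed prefactor $\epsilon^{\mu_i}$ may be, the strictly signed, nonvanishing drift accumulates without bound and eventually leaves the band $(C_-,C_+)$. Once one is content that the lower bound $C_-^{\sum_l \alpha_l^{ij}}$ and the resulting escape rate are uniform in $\epsilon$, the argument is immediate and the Lemma follows.
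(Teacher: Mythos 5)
Your proof is correct and follows essentially the same route as the paper's: assume all dominant terms in the $i$-th equation share one sign, use the permanency bounds $C_-<\bar x_l<C_+$ to bound the right-hand side away from zero (the paper's constant $\delta_i(C_-,C_+)$ is exactly your $c$ built from $C_-^{\sum_l \alpha_l^{ij}}$), and integrate to show $\bar x_i$ leaves the band $(C_-,C_+)$ in finite time, a contradiction. Your write-up is simply a more detailed version of the paper's argument, including the correct observation that the escape happens for each fixed $\epsilon>0$ because \eqref{perm} must hold for all $t>T_0$.
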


\begin{proof}
Let us suppose that $m_i=1$ for some $i$, or $m_i >1$, but all terms have the same sign $s$.
Let us consider this equation.  Then one has, for $s=1$,
$$
\frac{dx_i}{dt} >  \epsilon^{\mu_i} \delta_i(C^{-}, C^{+}) > 0.
$$
Therefore, $x_i(t) > \delta t + x_i(0)$ and the system cannot be permanent.
If $s=-1$, then
$$
\frac{dx_i}{dt} <  -\epsilon^{\tilde \mu_i}\delta_i(C^{-}, C^{+}) <0.
$$
Again it is clear that the system cannot be permanent.
\end{proof}

We call ``tropical equilibration'', the condition in Lemma \ref{lemmaequil}. This condition means
that permanency is acquired only if at least two terms of different signs have the maximal order,
for each equation of the system \eqref{chem}.

The tropical equilibration condition can be used to determine the renormalization exponents, by the
following algorithm.

{\em Step 1}. For each $i$ let us choose a pair $(j(i), k(i))$ such that $j, k \in \{1,\ldots, M_i\}$ and
$j < k$. The sign of the corresponding terms should be different.

{\em Step 2}. We resolve the linear system of algebraic equations
\begin{equation}
\gamma_{i j(i)} - \gamma_{i k(i)}= -\sum_{l=1}^n \alpha_l^{ij(i)} a_l  + \sum_{l=1}^n \alpha_l^{ik(i)} a_l,
\label{alg}
\end{equation}
for $a_l$, together with the inequalities (\ref{mu21}).

Notice that although that Step 2 has polynomial complexity, the tropical equilibration problem
has a number of choices that is exponential in the number of variables
at Step 1.

Assume that, as a result of this procedure, we obtain the two terms {\em toric system}
\begin{equation}
\frac{d\bar x_i}{dt} =\epsilon^{\mu_i} (F_{i}^+(\bar {\bf x}) - F_{i}^-(\bar {\bf x})), \quad F_{i}^{\pm}=P_{ij^{\pm}} \bar {\bf x}^{\vect{\alpha}_{\pm}^{ij}}.
\label{chemtrop1}
\end{equation}
One can expect that, in a "generic" case\footnote{supposing
that multi-indices $\vect{\alpha}_{ij}$ are chosen uniformly, by generic we understand
almost always except for cases of vanishing probability,
see also \cite{gorban-dynamic}}, all $\mu_i$ are mutually different, namely
\begin{equation}
\mu_1 < \mu_2 < ... < \mu_{n-1} <\mu_n.
\label{chain}
\end{equation}

We can now state a sufficient condition for permanency.
Let us consider the first equation (\ref{chemtrop1}) with $i=1$ and let us denote $y=\bar x_1, z=(\bar x_2, ..., \bar x_n)^{tr}$.
In this notation, the first equation becomes
\begin{equation}
\frac{dy}{dt} =f(y)=b_1(z) y^{\beta_1} - b_2(z) y^{\beta_2}, \quad b_1, b_2 > 0, \quad \beta_i \in {\bf R}.
\label{chemtrop}
\end{equation}
According to \eqref{chain} here $z(t)$  is a slow function of time and thus we can suppose that $b_i$ are constants (this step will be rendered rigorous at the end of this section, by using the concept of invariant
manifold and methods from \cite{henry1981geometric}).
The permanency property can be then checked in an elementary way.
\begin{lemma}{
Equation \ref{chemtrop} has the permanence property if and only if
 $$
\beta_1 < \beta_2.
$$
For fixed $z$ in these cases we have
$$
y(t,z) \to y_0(z) \quad as \ t \to \infty.
$$
}
\label{permcrit}
\end{lemma}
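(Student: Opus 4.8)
The plan is to treat $z$ as frozen, so that $b_1 = b_1(z)$ and $b_2 = b_2(z)$ are positive constants, and to analyze \eqref{chemtrop} as an autonomous scalar ODE on the phase line $y > 0$. Since $y^{\beta_1} > 0$ for every $y > 0$, I would first factor
$$
f(y) = y^{\beta_1}\bigl(b_1 - b_2\, y^{\beta_2 - \beta_1}\bigr),
$$
so that the sign of $f(y)$ coincides with the sign of $g(y) := b_1 - b_2\, y^{\beta_2 - \beta_1}$. When $\beta_1 \ne \beta_2$ this $g$ has a unique positive zero $y_0 = (b_1/b_2)^{1/(\beta_2 - \beta_1)}$, which is therefore the only equilibrium of \eqref{chemtrop} in $(0,\infty)$. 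The entire argument then reduces to reading off the direction of the flow on either side of $y_0$.

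For the sufficiency direction, suppose $\beta_1 < \beta_2$. Then $y \mapsto y^{\beta_2 - \beta_1}$ is strictly increasing, so $g(y) > 0$ on $(0, y_0)$ and $g(y) < 0$ on $(y_0, \infty)$; hence $f > 0$ to the left of $y_0$ and $f < 0$ to the right. Thus $y_0$ is globally attracting: every trajectory is monotone toward $y_0$ and remains in the compact interval with endpoints $y(0)$ and $y_0$, on which $f$ is bounded, so the solution is global and $y(t,z) \to y_0(z)$ as $t \to \infty$. This gives both the convergence claim and, since the trajectory is eventually confined to any neighbourhood of $y_0$, the permanence bounds $C_- < y(t) < C_+$ for $t > T_0$ (for instance $C_- = y_0/2$, $C_+ = 2 y_0$).

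For necessity, I would show that if $\beta_1 > \beta_2$ the signs reverse: $y \mapsto y^{\beta_2 - \beta_1}$ is now decreasing, so $f < 0$ on $(0, y_0)$ and $f > 0$ on $(y_0, \infty)$, making $y_0$ a repeller. A trajectory started below $y_0$ decreases monotonically to $0$, while one started above increases monotonically without bound, since it cannot converge to a finite limit (which would have to be an equilibrium, and $y_0$ is the only one). Consequently no uniform pair $C_\pm$ can trap all trajectories, and permanence fails. The degenerate case $\beta_1 = \beta_2$ collapses \eqref{chemtrop} to $\dot y = (b_1 - b_2)\, y^{\beta_1}$, which drives $y$ monotonically to $0$ or $\infty$ unless $b_1 = b_2$ (a non-generic equality giving $\dot y \equiv 0$), so permanence fails here as well; this closes the equivalence.

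This is essentially a one-dimensional phase-line analysis and I do not expect a deep obstacle. The points that require explicit care, and which I would state rather than gloss over, are: the behaviour at the boundary $y \to 0^+$ when the exponents $\beta_i$ are negative, where $y^{\beta_i}$ blows up yet the sign of $f$ is still governed by $g$; the verification that the attracting trajectories do not leave $(0,\infty)$ in finite time, so that global existence genuinely holds; and, in the necessity direction, the observation that the coexistence of trajectories converging to $0$ and trajectories diverging rules out any choice of uniform constants $C_\pm$, which is exactly what the permanence definition demands.
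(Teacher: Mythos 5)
Your proposal is correct and follows essentially the same route as the paper: a one-dimensional phase-line analysis showing that when $\beta_1 < \beta_2$ the function $f$ is positive for small $y>0$, negative for large $y$, and has a unique positive root that attracts all trajectories in $(0,\infty)$, giving a trapping interval and hence permanence. In fact your write-up is more complete than the paper's own proof, which establishes only the sufficiency (``if'') direction and omits the necessity argument --- the cases $\beta_1 > \beta_2$ and $\beta_1 = \beta_2$ --- that you carry out explicitly.
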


\begin{proof}
Consider the function  $f(y) = b_1 y^{\beta_1} - b_2 y^{\beta_2}$.
Under the condition $\beta_1 < \beta_2$, $f$ is negative for sufficiently large $y>0$,
and positive for sufficiently small $y>0$.
Moreover, $f$ has a single positive root $y_1$ on $(0, +\infty)$.
Therefore, all the trajectories of $dy/dt=f(y)$ tends to $y_1$ as $t \to \infty$ and,
for any $\delta >0$, the interval $(y_1-\delta, y_1+\delta)$ is a trapping domain.
This proves the permanency.
\end{proof}

Let us note that tropical equilibrations with permanency imply the
existence of invariant manifolds. This allows to reduce the number of variables of the model while preserving good accuracy in the description of the dynamics. The following Lemma
is useful in this aspect.
\begin{lemma}{
Consider the system
\begin{equation}
\frac{dy}{dt} =f(y,z)=b_1(z) y^{\beta_1} - b_2(z) y^{\beta_2}, \quad b_1, b_2 > \delta_1 >0, \quad \beta_i \in {\bf R}.
\label{chemtrop4}
\end{equation}
\begin{equation}
\frac{dz}{dt} =\lambda F(y, z),
\label{Zdt}
\end{equation}
where $z \in {\bf R}^m$, $\lambda >0$ is a parameter
 and the function $F$ enjoys the following properties. This function lies in an H\" older class
$$
F \in C^{1+r}, \quad r > 0,
$$
and the corresponding norms are uniformly bounded  in $\Omega=(0, +\infty) \times W$, for some open domain $W \subset {\bf R}^m$:
$$
 |F|_{C^{1+r}(\Omega)} < C_2.
$$
Assume that the condition  of Lemma \ref{permcrit} holds.
We also suppose that $b_i$ are smooth functions of $z$ for all $z$ such that $|z| >\delta_0 >0$. Assume that
$z \in W$ implies $|z| > \delta_0$.

Let $y_1(z)$ be the unique solution of $f(y,z)=0$.

Then, for sufficiently small $\lambda < \lambda_0(C_2, b_1, b_2, \beta_1, \beta_2, \delta_0, \delta_1)$ equations (\ref{chemtrop4}), (\ref{Zeq})
have a locally invariant and locally attracting manifold
\begin{equation}
y =Y(z, \lambda),  \quad Y \in C^{1+r}(W),
\label{Zeq}
\end{equation}
and $Y$ has the asymptotics
\begin{equation}
Y(z, \lambda)=y_1(z) + \tilde Y,  \quad \tilde Y \in C^{1+r}(W),
\label{Zeq1}
\end{equation}
where
\begin{equation}
|\tilde Y(z, \lambda)|_{C^{1+r}(W)} < C_s\lambda^{s},  \quad s >0.
\label{Zeq2}
\end{equation}
}
\label{L3.6}
\end{lemma}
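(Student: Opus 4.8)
The plan is to read equations \eqref{chemtrop4}--\eqref{Zdt} as a slow--fast (singularly perturbed) system, with fast scalar variable $y$ and slow vector variable $z\in\R^m$ evolving at rate $\lambda$, and to construct the manifold \eqref{Zeq} by the Lyapunov--Perron invariant-manifold method in the spirit of \cite{henry1981geometric}. The critical manifold is the graph $y=y_1(z)$, i.e.\ the zero set of $f(\cdot,z)$ furnished by Lemma \ref{permcrit}. The conclusion is local precisely because $W$ is only an open set, so the classical compactness hypothesis of Fenichel theory is replaced by the standing uniform bounds.

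First I would establish normal hyperbolicity with an attracting fast fibre. Differentiating $f$ along the critical manifold gives
\[
\partial_y f(y_1(z),z)=b_1(z)\,y_1^{\beta_1-1}(\beta_1-\beta_2),
\]
and since $b_1>\delta_1$, $\beta_1<\beta_2$, while $b_1,b_2>\delta_1$ and $|z|>\delta_0$ force uniform two-sided bounds on $y_1(z)$, I would extract a uniform spectral gap $\partial_y f(y_1(z),z)\le-\kappa<0$ on $W$. This is the attractivity direction needed, and it is uniform in $z$. Next I would straighten the critical manifold via $w=y-y_1(z)$, turning the system into $\dot w=A(z)w+g(w,z)-\lambda\,\partial_z y_1(z)\,F(y_1+w,z)$ and $\dot z=\lambda F(y_1+w,z)$, where $A(z)=\partial_y f(y_1(z),z)\le-\kappa$ and $g$ collects the super-linear remainder in $w$ (well defined and $C^{1+r}$ since $b_i$ are smooth and $y_1$ bounded away from the degenerate values). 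An invariant manifold is then sought as a graph $w=h(z,\lambda)$, and the invariance identity $Ah+g-\lambda\,\partial_z y_1\,F=(\partial_z h)\,\lambda F$ is recast, using the exponential decay supplied by the gap $\kappa$, as a fixed-point equation $h=\mathcal{T}_\lambda[h]$ of Lyapunov--Perron type. Because every term forcing $h$ away from $0$ carries a factor $\lambda$, whereas $\kappa$ controls the relevant integrals, $\mathcal{T}_\lambda$ maps a ball of radius $O(\lambda)$ in $C^{1+r}(W)$ into itself and contracts there once $\lambda<\lambda_0(C_2,b_1,b_2,\beta_1,\beta_2,\delta_0,\delta_1)$. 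The Banach fixed point yields $h$, hence $Y(z,\lambda)=y_1(z)+h(z,\lambda)$, and the $O(\lambda)$ radius of the invariant ball gives the asymptotics \eqref{Zeq1}--\eqref{Zeq2} with $s=1$. Local invariance and local attractivity follow from the construction together with the uniform fibre contraction generated by $A\le-\kappa$.

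I expect the main obstacle to be the $C^{1+r}$ regularity with uniform bounds, rather than mere existence. Producing a Lipschitz or merely continuous invariant graph from the contraction is routine, but closing the estimate in the H\"older class $C^{1+r}$---controlling the H\"older seminorm of $\partial_z h$ under $\mathcal{T}_\lambda$ and propagating it into the asymptotic bound \eqref{Zeq2}---requires a fibre-contraction argument and uses the hypotheses $F\in C^{1+r}$ and $|F|_{C^{1+r}(\Omega)}<C_2$ sharply. The non-compactness of $W$ is the secondary difficulty; it is handled by letting the uniform gap $\kappa$ and the uniform $C^{1+r}$ bound on $F$ play the role of the compactness assumption in the standard invariant-manifold theorems, which is exactly what confines the statement to a \emph{locally} invariant, \emph{locally} attracting manifold.
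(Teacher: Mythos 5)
Your proposal is correct and in substance identical to the paper's proof: the paper disposes of this lemma in a single line, by citing Theorem 9.1.1 of \cite{henry1981geometric} (Ch.~9), and your Lyapunov--Perron construction (uniform spectral gap at the critical manifold $y=y_1(z)$, straightening by $w=y-y_1(z)$, fixed point in a ball of radius $O(\lambda)$, fibre contraction for the $C^{1+r}$ regularity) is exactly the machinery packaged inside that theorem. The only difference is that you inline the argument and verify its hypotheses explicitly, which the paper leaves entirely to the reader.
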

\begin{proof}
This lemma can be derived from
Theorem 9.1.1 from (\cite{henry1981geometric}, Ch. 9).
\end{proof}

The generic situation described by the conditions \eqref{chain} leads to trivial ``chain-like''
relaxation towards a point attractor, provided that we have permanency at each step.
More precisely, all the variables have separate timescales and dissipative dynamics.
The fastest variable relaxes first, then the second fastest one, and so forth, the chain
of relaxations leading to a steady state.

The following theorem describes a less trivial situation, when some timescales are not totally separated and limit cycles are possible.
\begin{theorem}
Assume $\mu_1 < \mu_2 < ... < \mu_{n-1} \leq \mu_n$ holds.

i) If the procedure, described above, leads to the permanency property at each step, where $i=1,2,..., n-2$,
and if the successive application of the lemma \ref{L3.6} for the tropically truncated toric system  (\ref{chemtrop1}) uniquely defines the locally invariant  smooth manifold
\begin{equation} \label{invman}
\bar x_i=X_i(\bar x_{n-1}, \bar x_{n}),  \quad X_i \in C^{1+r}, \quad i=1,...,n-2,\quad r >0,
\end{equation}
as the unique stable hyperbolic equilibrium of the tropically truncated system \eqref{chemtrop1}.
Then,  the original system has an invariant manifold close to \eqref{invman}
\begin{equation} \label{invman2}
\bar x_i=X_i(\bar x_{n-1}, \bar x_{n})+ \phi(\bar x_{n-1}, \bar x_{n},\epsilon),  \quad i=1,...,n-2.
\end{equation}
where the corrections $\phi_i$  satisfy
$$
 |\phi_i(\cdot,\cdot,\epsilon)|_{C^{1+r}} \to 0 \quad  (\epsilon \to 0).
$$

ii) If the procedure, described above, leads to the permanency property at each step, where $i=1,2,..., n-2$,
and the last two equations of the tropically truncated system
have a globally attracting hyperbolic rest point or globally attracting hyperbolic limit cycle,
then the tropically truncated system is permanent and has an attractor of the same type.
 Moreover, for sufficiently small $\epsilon$ the initial system also is permanent for initial data from  some
appropriate domain $W_{\epsilon, a, A}$ and has an analogous
 attracting hyperbolic rest point (limit cycle) close to the attractor of the truncated system.

iii) If the rest point (cycle) is not globally attracting,
then we can say nothing on permanency but, for sufficiently small $\epsilon$,  the initial system  still has an analogous
attracting hyperbolic rest point (limit cycle) close to the attractor of truncated system and the same topological structure.
\label{bigtheorem}
\end{theorem}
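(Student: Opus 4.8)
The plan is to treat (\ref{chem}) as a multiscale slow--fast system and combine an iterated invariant--manifold reduction with a persistence argument for normally hyperbolic invariant manifolds. Since $0<\epsilon<1$, the ordering $\mu_1<\mu_2<\cdots<\mu_{n-1}\le\mu_n$ means that $\bar x_1$ relaxes on the fastest timescale and $\bar x_{n-1},\bar x_n$ on the two slowest (possibly comparable) ones. First I would eliminate the fast variables of the truncated toric system (\ref{chemtrop1}) one at a time. Rescaling time by $\epsilon^{\mu_1}$ writes the fastest equation in the form (\ref{chemtrop4})--(\ref{Zdt}) with $y=\bar x_1$, $z=(\bar x_2,\dots,\bar x_n)$ and slowness parameter $\lambda=\epsilon^{\mu_2-\mu_1}$, which is small by (\ref{chain}) while the remaining factors $\epsilon^{\mu_j-\mu_2}\le 1$ are absorbed into a uniformly bounded $F$. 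Permanency at this step forces $\beta_1<\beta_2$ (Lemma \ref{permcrit}), so Lemma \ref{L3.6} yields an attracting $C^{1+r}$ manifold $\bar x_1=Y_1(\bar x_2,\dots,\bar x_n)$, close to the root $y_1$. Restricting to it produces an $(n-1)$-dimensional truncated system of the same structure with $\bar x_2$ now fastest; iterating $n-2$ times assembles the composite manifold $\bar x_i=X_i(\bar x_{n-1},\bar x_n)$, $i=1,\dots,n-2$, of (\ref{invman}), which by hypothesis is the unique stable hyperbolic slaving relation.

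For part (i), by the very construction of the tropical truncation the right-hand side of (\ref{chem}) differs from that of (\ref{chemtrop1}) only by terms of strictly higher order in $\epsilon$; in the renormalized variables this is a $C^{1+r}$-small perturbation of relative size $O(\epsilon^{\gamma})$, $\gamma>0$, uniformly on $\Omega_{C_-,C_+}$ (the same bound underlying Proposition \ref{comparison}). The manifold $\{\bar x_i=X_i\}$ is normally hyperbolic and attracting, its normal contraction dominating the tangential rates precisely because of the timescale gaps. Hence I would invoke the robustness of normally hyperbolic invariant manifolds --- Theorem 9.1.1 of \cite{henry1981geometric}, applied one more time --- to conclude that the manifold persists as the nearby $C^{1+r}$ manifold (\ref{invman2}), with correction $\phi_i$ satisfying $|\phi_i|_{C^{1+r}}\to 0$ as $\epsilon\to 0$. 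This proves (i).

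For parts (ii) and (iii) the reduced dynamics on the invariant manifold is the planar system in $(\bar x_{n-1},\bar x_n)$ given by the two slowest equations, whose truncated form is assumed to carry a (globally, resp.\ locally) attracting hyperbolic rest point or limit cycle. In case (ii), global attraction provides a compact trapping region bounded away from $0$ and $\infty$; lifting it through the attracting invariant manifold gives a trapping region for the whole truncated system, hence permanency and an attractor of the same type. Passing to the original system, the planar reduced field is perturbed by $O(\epsilon^{\gamma})$: a hyperbolic rest point persists by the implicit function theorem, and a hyperbolic limit cycle persists as a nearby normally hyperbolic invariant circle; the associated annular (resp.\ disc) trapping region survives the perturbation with transverse inflow, so permanency and the close attractor persist for small $\epsilon$. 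In case (iii), without global attraction no trapping region is available and permanency cannot be asserted, but the hyperbolic rest point (resp.\ limit cycle) still persists under the $O(\epsilon^{\gamma})$ perturbation, with the same local topological structure given by the Hartman--Grobman theorem (resp.\ by persistence of the hyperbolic periodic orbit and its Poincar\'e map); this yields the claimed orbital equivalence near the attractor.

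The main obstacle I anticipate is not any single persistence step but the uniformity, in $\epsilon$, carried through the iterated reduction of the first paragraph. One must verify at every stage that the substitution of the previous invariant manifold leaves the coefficients $b_1,b_2$ smooth, bounded below by a fixed $\delta_1>0$, and $C^{1+r}$ with uniformly bounded norms on a domain where $|z|>\delta_0$, and that the effective slowness parameter remains genuinely small and $\epsilon$-uniform. In other words, the H\"older estimates (\ref{Zeq2}) must compose over the $n-2$ reductions without eroding regularity or the uniform constants, since it is exactly this uniform normal hyperbolicity that legitimizes the persistence arguments in parts (i)--(iii) and, in particular, the robustness of the trapping region in the limit-cycle case.
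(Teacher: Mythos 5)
Your proposal is correct and follows essentially the same route as the paper: inductive application of Lemma \ref{L3.6} (Henry's Theorem 9.1.1) to eliminate the fast variables and build the manifold \eqref{invman}, a trapping-neighborhood argument for permanency of the truncated system in case (ii), and persistence of hyperbolic attractors under the $O(\epsilon^{\gamma})$ perturbation to pass to the original system in (i)--(iii). The paper's own proof is far terser (it cites structural stability of hyperbolic sets rather than the implicit function theorem and normally hyperbolic persistence you invoke), but its skeleton is exactly the one you spell out, including the uniformity concern you flag.
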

\begin{proof}

{\bf i)}   This follows from  Lemma \ref{L3.6}, which can be applied inductively, step by step.

{\bf ii)}  Suppose that the tropically truncated system (TTS) has a globally attracting compact invariant set $\mathcal A$. Let $\Pi$ be an open neighborhood
of this set. We can choose this neighborhood as a box that contains $\mathcal A$. Then, for all initial data
$x(0)$, the corresponding trajectory $x(t), x(0)$ lies in $\Pi$ for all $t > T_0(x_0, \Pi)$.  Therefore, our TTS is permanent.
Here we do not use the fact that the cycle (rest point) is hyperbolic.

 Permanency of the initial system follows from hyperbolicity of $\mathcal A$. Hyperbolic sets are persistent (structurally stable \cite{ruelle1989chaotic}). Since this set is globally attracting, all TTS is structurally stable (as a dynamical system).
This implies that the initial system has a  hyperbolic attractor close to $\mathcal A$, since initial system is a small perturbation
 of TTC in $\Pi$.

{\bf iii)} If the set $\mathcal A$ is only locally attracting, the last assertion of the Theorem follows from persistency of hyperbolic sets.
\end{proof}

{\em Remark.}
Theorem \ref{bigtheorem} implicitly supposes that all fast variables $x_i$, $i=1,2,..., n-2$ can
be expressed as functions of the remaining slow variables $x_{n-1},x_n$. It does not consider the
situation when the successive application of the lemma \ref{L3.6} leads to degenerate equilibria. This
situation typically occurs when the tropically truncated system has conservation laws, i.e.
linear combinations of the fast variables are invariant with respect to the truncated fast dynamics.
This case, asking for variable aggregation and new slow variables will be discussed in detail elsewhere.

\section{Geometry of tropical equilibrations}

In this section we provide a geometrical interpretation of tropical equilibrations. We consider
networks of biochemical reactions with mass action kinetic laws.
Each reaction between reagents $A_i$
is defined as $$\sum_i \vect{\alpha}_{ji} A_i \rightarrow \sum_{k} \beta_{jk} A_k.$$
The stoichiometric vectors  $\vect{\vect{\alpha}_{j}} \in \N^n$, $\vect{\beta_{j}} \in \N^n$ have coordinates
$\vect{\alpha}_{ji}$ and $\beta_{jk}$ and define which species are consumed and produced by the reaction $j$
and in which quantities.
The mass action law means that reaction rates are monomial functions and read
 \begin{equation}
 R_j(\vect{x}) = k_j \vect{x^{\vect{\alpha}_{j}}}.
 \label{rrateone}
 \end{equation}
 where $k_j >0$ are kinetic constants.
The network dynamics is described as follows
 \begin{equation}
 \D{\vect{x}}{t} = \sum_j k_j ({\beta_{ji}} - {\alpha_{ji}})  \vect{x}^{\vect{\alpha_{j}}}.
 \label{massaction}
 \end{equation}
After parameters and variables rescaling, $k_j = \bar k_j \epsilon^{\gamma_j}$,
$\vect{x} = \bar{\vect{x}} \epsilon^{\vect{a}}$ we obtain
 \begin{equation}
 \D{\bar{x}_i}{t} = (\sum_j \epsilon^{\mu_j} k_j ({\beta_{ji}} - {\alpha_{ji}})  {\bar{\vect{x}}}^{\vect{\alpha_{j}}})\epsilon^{-a_i},
 \label{massactionrescaled}
 \end{equation}
where 
\begin{equation}
\mu_j = \gamma_j + <\vect{a},\vect{\alpha_j}>.
\label{muj}
\end{equation}

\begin{definition} \label{eqreactions}
Two reactions $j$, $j'$ are equilibrated on the species $i$ iff:

i) $\mu_j = \mu_{j'}$,

ii) $(\vect{\beta_{j}} - \vect{\alpha_{j}})_i (\vect{\beta_{j'}} - \vect{\alpha_{j'}})_i < 0$,

iii) $\mu_k \geq \mu_j$ for any reaction $k \neq j,j'$, such that
$(\vect{\beta_{k}} - \vect{\alpha_{k}})_i \neq 0$.
\end{definition}

{\em Remarks.}
Definition \ref{eqreactions} ensures the conditions of Lemma \ref{lemmaequil} and is thus equivalent to
tropical equilibration of the species $i$.

According to \eqref{muj} and Definition \ref{eqreactions}, the equilibrations correspond to vectors $\vect{a} \in R^n$ where the minimum in the definition of the piecewise-affine function
$f_i(\vect{a}) = \min_j (\gamma_j + <\vect{a},\vect{\alpha_j}>)$ is attained at least twice.

Let us consider the equality $\mu_j = \mu_{j'}$. This represents the equation of a
$n-1$ dimensional hyperplane of $\R^n$, orthogonal to the vector $\vect{\alpha_j} - \vect{\alpha_{j'}}$:

\begin{equation}
\gamma_j + <\vect{a},\vect{\alpha_j}> = \gamma_{j'} + <\vect{a},\vect{\alpha_{j'}}>
\label{lines}
\end{equation}

For each species $i$, we consider the set of reactions ${\mathcal R}_i$
that act on this species, namely the reaction $k$ is in ${\mathcal R}_i$ iff
$(\vect{\beta_{k}} - \vect{\alpha_{k}})_i \neq 0$.
The finite set ${\mathcal R}_i$ can be characterized by the corresponding set
of stoichiometric vectors $\vect{\alpha_k}$. 

The set of points of $\R^n$ where at least two reactions equilibrate on the species
$i$ corresponds to the places where the function $f_i$ is not locally affine (the minimum
in the definition of $f_i$ is attained at least twice).

For each species, we also define the Newton polytope ${\mathcal N}_i$, that is the
convex hull of the vectors $\vect{\alpha_k}, k\in {\mathcal R}_i$.
The hyperplanes defined by \eqref{lines} and corresponding to equilibrations of two reactions
on the same species $i$ are orthogonal to edges of the Newton polytope ${\mathcal N}_i$.
${\mathcal N}_i$ is also the Newton polytope of the polynomial
$P_i(\vect{x}) = \sum_j k_j (\beta_{ji} - \alpha_{ji})  \vect{x}^{\vect{\alpha_{j}}}$ that defines the rhs of the ordinary differential
equation satisfied by the species $i$.

We can now state the following

\begin{proposition}
{\em There is a bijection between the locus ${\mathcal T}_i$ of vectors $\vect{a}$ where the min-plus polynomial
$f_i(\vect{a})$ is not linear and the tropical manifold of the polynomial $P_i(\vect{x})$ that
defines the rhs of the ordinary differential equation satisfied by the species $i$. The reaction
equilibrations correspond to vectors $\vect{a}$ included in ${\mathcal T}_i$ but satisfying also the
condition ii) of Definition \ref{eqreactions}.               }
\end{proposition}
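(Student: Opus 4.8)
The plan is to exhibit the bijection as the standard duality between a classical polynomial and its tropicalization, once the two corner loci are written on a common footing. First I would recall that, by the definition of the $Dom$ operator, the tropical manifold of $P_i$ is the set of $\vect{u}=(\log x_1,\ldots,\log x_n)$ where the maximum $\max_{j\in\mathcal{R}_i}(\log|k_j(\beta_{ji}-\alpha_{ji})| + <\vect{u},\vect{\alpha_j}>)$ is attained by at least two indices, whereas $\mathcal{T}_i$ is the set of $\vect{a}$ where the minimum $f_i(\vect{a})=\min_{j}(\gamma_j + <\vect{a},\vect{\alpha_j}>)$ is attained at least twice. The link is the scaling $\vect{x}=\bar{\vect{x}}\epsilon^{\vect{a}}$ together with $k_j=\bar k_j\epsilon^{\gamma_j}$: substituting $u_l=a_l\log\epsilon+\log\bar x_l$ and $\log|k_j(\beta_{ji}-\alpha_{ji})|=\gamma_j\log\epsilon+O(1)$ turns the $j$-th exponent inside $Dom(P_i)$ into $\mu_j\log\epsilon+O(1)$, with $\mu_j=\gamma_j+<\vect{a},\vect{\alpha_j}>$ as in \eqref{muj}. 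Since $\log\epsilon<0$, the largest of these exponents corresponds to the smallest $\mu_j$, so the $Dom$ maximum selects exactly the monomial realizing the minimum $f_i(\vect{a})$.

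Next I would read off the bijection from this identity. For a pair $(j,j')$ the locus where the two monomials of $Dom(P_i)$ tie is the hyperplane $<\vect{u},\vect{\alpha_j}-\vect{\alpha_{j'}}>=\log|k_{j'}(\beta_{j'i}-\alpha_{j'i})|-\log|k_j(\beta_{ji}-\alpha_{ji})|$, which has the same normal $\vect{\alpha_j}-\vect{\alpha_{j'}}$ as the hyperplane \eqref{lines} defining the tie $\mu_j=\mu_{j'}$ in $\vect{a}$-space. Thus the two arrangements are parallel face by face, and the rescaling $\vect{a}=\vect{u}/\log\epsilon$ maps each stratum of the tropical manifold of $P_i$ onto a stratum of $\mathcal{T}_i$, the $O(1/\log\epsilon)$ discrepancy in the offsets vanishing as $\epsilon\to 0$. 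To make the correspondence combinatorial and $\epsilon$-independent I would label each face by the index subset $S\subseteq\mathcal{R}_i$ of simultaneously optimal monomials; the max/min duality above shows that $S$ is co-maximal for $Dom(P_i)$ iff it is co-minimal for $f_i$, so the two complexes share the same poset of faces. Since both are corner loci of tropical polynomials with identical support $\{\vect{\alpha_k}:k\in\mathcal{R}_i\}$, and hence the same Newton polytope $\mathcal{N}_i$, this assignment is a genuine \emph{bijection of polyhedral complexes}: every face of one has a unique partner, dual to the same face of $\mathcal{N}_i$.

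Finally I would account for the second assertion. Both $Dom$ and $f_i$ depend only on the magnitudes $|k_j(\beta_{ji}-\alpha_{ji})|$ and the orders $\gamma_j$, so the corner locus is blind to the signs of $(\beta_{ji}-\alpha_{ji})$; tropical equilibration, by contrast, requires condition ii) of Definition~\ref{eqreactions}, namely that two tied reactions carry opposite signs on species $i$. Intersecting $\mathcal{T}_i$ with this sign constraint therefore carves out exactly the equilibration locus, which is the claimed description. The main obstacle, I expect, is precisely the first step: turning the heuristic $\max\leftrightarrow\min$ duality into a clean bijection requires controlling the $O(1)$ corrections coming from $\log\bar x_l$ and $\log|\bar k_j(\beta_{ji}-\alpha_{ji})|$. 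These do not affect the normal directions of the hyperplanes, but they do shift their offsets, so the geometric bijection is exact only in the $\epsilon\to 0$ limit (equivalently, only at the combinatorial level of faces for fixed $\epsilon$). Phrasing the result as a bijection of polyhedral complexes dual to the common Newton polytope $\mathcal{N}_i$ sidesteps this difficulty, and is the formulation I would ultimately prove.
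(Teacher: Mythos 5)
Your proposal is correct and follows essentially the same route as the paper, which in fact states this proposition without any formal proof: the intended argument is exactly the chain you reconstruct, namely (i) equilibrations are the points where the minimum defining $f_i$ is attained at least twice (this is established just before the proposition, from \eqref{muj} and Definition~\ref{eqreactions}), (ii) the tie loci \eqref{lines} of $f_i$ and the tie loci of $Dom\, P_i$ are hyperplanes with the same normals $\vect{\alpha_j}-\vect{\alpha_{j'}}$, both dual to edges of the common Newton polytope ${\mathcal N}_i$, with the max/min exchange coming from $\log\epsilon<0$, and (iii) condition ii) of Definition~\ref{eqreactions} is precisely the sign constraint that cuts the reaction equilibrations out of ${\mathcal T}_i$. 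On all three points you and the paper agree; you are simply more explicit.

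One caveat concerns the part where you go beyond the paper, the claim that at fixed $\epsilon$ a subset $S\subseteq{\mathcal R}_i$ is co-optimal for $Dom\, P_i$ iff it is co-optimal for $f_i$, so that the two corner loci share the same poset of faces. This is not true in general. The normalized weights $\log|k_j(\beta_{ji}-\alpha_{ji})|/\log\epsilon=\gamma_j+O(1/\log\epsilon)$ are a perturbation of the $\gamma_j$, and a perturbation of weights can strictly refine the regular subdivision of ${\mathcal N}_i$ induced by the $\gamma_j$: every set of monomials that ties for $Dom\, P_i$ is contained in a set that ties for $f_i$, but a single tie of $f_i$ can split into several ties of $Dom\, P_i$. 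Concretely, if four exponents forming a unit square all carry equal $\gamma_j$, the corner locus of $f_i$ has one four-valent vertex, while the tropical manifold of $P_i$ generically has two trivalent vertices joined by a short bounded edge; the face posets then differ. Hence your parenthetical ``exact only in the $\epsilon\to 0$ limit (equivalently, only at the combinatorial level of faces for fixed $\epsilon$)'' asserts an equivalence that fails in degenerate cases. The clean fixes are either to assume genericity of the $\gamma_j$ (so that the induced subdivision is already maximal and small perturbations cannot refine it), or to define the tropical manifold of $P_i$ through the valuation $c\mapsto\lim_{\epsilon\to 0}\log|c|/\log\epsilon$, which sends the coefficient of monomial $j$ exactly to $\gamma_j$ and makes your bijection literally the identity. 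With either fix your argument is sound, and it is more precise than anything the paper itself provides.
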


{\em Remarks.} This property can be used to put into correspondence reaction equilibrations and
slow invariant manifolds. Indeed, if a reaction equilibration exists, this leads to a
slow manifold that  is close to some parts of the tropical manifold
of $P_i(\vect{x})$. For instance, a reaction equilibration described by \eqref{lines}
will correspond to an invariant manifold close to a hyperplane orthogonal to
$\vect{\alpha_j} - \vect{\alpha_{j'}}$. The condition ii) of Definition \ref{eqreactions}
is needed for equilibrium (the equilibrated reactions have
to have opposite effects on the species $i$, one has to produce and the other has to
consume the species). Without this condition, the dynamics would simply cross the tropical
manifold with no deviation. However, the condition ii) is not sufficient for
stability of the equilibration (permanence). A sufficient stability condition
is given by Lemma~\ref{permcrit} and
reads $(\vect{\alpha_j} - \vect{\alpha'_j})_i (\vect{\beta_{j}} - \vect{\alpha_{j}})_i > 0.$

\section{Tropical approach to the permanency problem}

We have shown in the previous sections that tropical ideas can be used to simplify complex systems, by
tropical removing. During this procedure, permanency has to be checked at intermediate steps on
tropically truncated systems. Lemma~\ref{permcrit} allows to check permanency for toric systems
with separated time scales. We provide here another approach to permanency, that can be applied to more general
situations. We consider only upper estimates. The lower estimates can be found in a similar way.



Like in the preceding sections the truncated system is obtained by removing from
the non-tropicalized system \eqref{mainsystem} all the terms excepting the maximum
order terms. We denote the corresponding vector field by $F^{tr}$ and the
truncated differential equations read:
\begin{equation}
\frac{d x_i}{dt} = F_i^{tr}(\vect{x}). \label{chemtR}
\end{equation}
 Let us first formulate a  Lemma.
 \vspace{0.2cm}
 \begin{lemma}  Assume that non-tropicalized system \eqref{mainsystem} has a smooth Lyapunov function $V(\vect{x})$ defined on the cone
 ${\bf R}^n_{>}$ such that
 \begin{equation}
   dV(\vect{x}(t))/dt  \le 0
 \label{DV}
\end{equation}
 on trajectories $\vect{x}(t)=(x_1(t),...x_n(t))$ of \eqref{mainsystem} and
 \begin{equation}
   V(\vect{x}) \to \infty \ as \quad |\vect{x}| \to \infty.
 \label{VX}
\end{equation}
 Then, if $\vect{x}(t)$ is a trajectory of \eqref{mainsystem} such that $|\vect{x}(0)|<\delta'$, then there is a constant $C_0(\delta')$ such that
 \begin{equation}
   |\vect{x}(t)| < C_0, \quad t > 0.
 \label{VXb}
\end{equation}
\end{lemma}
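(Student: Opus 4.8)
The plan is to use the two hypotheses on $V$ in the classical Lyapunov fashion: monotonicity along trajectories to control $V$ at time $t$ by its value at time $0$, and coercivity \eqref{VX} to convert a bound on $V$ into a bound on $|\vect{x}|$. First I would integrate \eqref{DV}: since $dV(\vect{x}(t))/dt \le 0$ for all $t > 0$, the scalar function $t \mapsto V(\vect{x}(t))$ is nonincreasing, hence $V(\vect{x}(t)) \le V(\vect{x}(0))$ for every $t > 0$. Here I implicitly use that the whole orbit remains in the nonnegative orthant ${\bf R}^n_{>}$, which is precisely the invariance assumption already made for the dynamics \eqref{chem}, so that $V$ is defined and smooth all along the trajectory and the derivative in \eqref{DV} makes sense throughout.

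Next I would bound the right-hand side uniformly over all admissible initial data. The set $K = \{\vect{x} : |\vect{x}| \le \delta',\ x_i \ge 0\}$ is compact, and $V$ is continuous on it, so $M := \max_{\vect{x} \in K} V(\vect{x})$ is finite and depends only on $\delta'$. Since $|\vect{x}(0)| < \delta'$ we have $\vect{x}(0) \in K$, and therefore $V(\vect{x}(t)) \le V(\vect{x}(0)) \le M$ for all $t > 0$; equivalently, the orbit is trapped in the sublevel set $S_M = \{\vect{x} \in {\bf R}^n_{>} : V(\vect{x}) \le M\}$.

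The final step is to show that $S_M$ is bounded, which is exactly where the coercivity \eqref{VX} enters. By \eqref{VX}, for the value $M$ there exists a radius $R = R(M)$ such that $|\vect{x}| > R$ implies $V(\vect{x}) > M$; contrapositively, $V(\vect{x}) \le M$ forces $|\vect{x}| \le R$. Setting $C_0 = C_0(\delta') := R\bigl(M(\delta')\bigr)$ and then enlarging it slightly to make the inequality strict yields $|\vect{x}(t)| < C_0$ for all $t > 0$, which is \eqref{VXb}.

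I do not expect a serious obstacle, since this is the standard estimate that a coercive Lyapunov function has bounded sublevel sets; the only points requiring care are technical. One must guarantee that the orbit never leaves the domain ${\bf R}^n_{>}$ of $V$ (this is the invariance of the orthant) so that \eqref{DV} holds along the entire trajectory, and one must read \eqref{VX} as the genuinely uniform statement that each sublevel set is bounded, rather than merely that $V$ is unbounded. As a bonus, if one wants global existence in $t$ in addition to the a priori bound, it comes for free: the estimate \eqref{VXb} rules out finite-time blow-up, so the maximal solution extends to all $t > 0$.
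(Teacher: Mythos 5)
Your proof is correct and follows essentially the same route as the paper's: monotonicity of $V$ along trajectories from \eqref{DV} combined with the coercivity \eqref{VX}. The paper states it as a one-line contradiction (an unbounded orbit would force $\sup_{t>0} V(\vect{x}(t)) = +\infty$, contradicting $V(\vect{x}(t)) \le V(\vect{x}(0))$), whereas you argue directly through bounded sublevel sets and, slightly more carefully than the paper, make the uniformity of $C_0$ in $\delta'$ explicit by maximizing $V$ over the compact set of admissible initial data.
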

\begin{proof}
 Indeed, if $|\vect{x}(t)|$ are unbounded as $t \to +\infty$, one has $\sup_{t>0} V(\vect{x}(t)) =+ \infty$, but
 (\ref{DV}) entails $V(\vect{x}(t)) \le V(\vect{x}(0))$.
\end{proof}
 Let us consider the tropical version \eqref{chemtR}. Assume that the truncated
  version \label{chemtR}
  has a
 {\em strong} Lyapunov function $V^{tr}(x)$.
 For a truncated vector field $F^{tr}$ this function satisfies
 \begin{equation}
   <\vect{\nabla} V^{tr}(\vect{x}(t)), \vect{F}^{tr}(\vect{x}(t))>  \le - \kappa |\vect{\nabla} V^{tr}(\vect{x}(t))||\vect{F}^{tr}(\vect{x}(t))|, \quad \kappa > 0
 \label{DVt}
\end{equation}
 on trajectories $\vect{x}(t)=(x_1(t),...x_n(t))$ of \eqref{mainsystem} and
 \begin{equation}
   V^{tr}(\vect{x}) \to \infty \quad as \quad |\vect{x}| \to \infty.
 \label{VXt}
\end{equation}
 Here $\vect{x} \in {\bf R}^n_{> }$.

 Such a function can be found for some tropical versions of two component systems. For example, if
 \begin{equation}
   dx/dt= k_1 x^a y^b,
 \label{DVt1}
\end{equation}
 \begin{equation}
   dy/dt= -k_2 x^a y^b,
 \label{DVt3}
 \end{equation}
 where $a, b > 0$ and $k_1, k_2 >0$, we can define $V^{tr}$ by
 \begin{equation}
   V^{tr}= x + \beta y,
 \label{DVt4}
 \end{equation}
 where $\beta k_2 > k_1$.  Then $\nabla V=(1, \beta)$, and (\ref{DVt}), (\ref{VXt}) hold.
 \begin{lemma}{\em Assume the tropicalized system (\ref{chemtR}) has a smooth Lyapunov function $V^{tr}(\vect{x})$ defined on the cone ${\bf R}^n_{>}$ such that  (\ref{DVt}), (\ref{VXt}) hold. Assume that $F_i(\vect{x})$ and $F_i^{tr}(\vect{x})$
 are multivariate polynomials of $\vect{x}$ such that $deg(F_i^{tr}) > deg(\tilde F_i)$, where
 $\tilde F_i = F_i - F_i^{tr}$.

 Then, if $\vect{x}(0) < \delta'$, then there is a constant $C_0(\delta')$ such that solutions of non-tropicalized system (\ref{chem}) satisfy
 \begin{equation}
   |\vect{x}(t)| < C_0, \quad t > 0.
 \label{VXtt}
\end{equation}
}
\end{lemma}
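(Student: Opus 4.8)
The plan is to use $V^{tr}$ itself as a Lyapunov function for the full system \eqref{chem}, showing that the lower-degree terms collected in $\tilde F = F - F^{tr}$ cannot destroy the strict dissipativity \eqref{DVt} far from the origin. First I would differentiate $V^{tr}$ along a trajectory of the \emph{full} field $F = F^{tr} + \tilde F$ and split
\[
\frac{d}{dt}V^{tr}(\vect{x}(t)) = <\nabla V^{tr}, F^{tr}> + <\nabla V^{tr}, \tilde F>.
\]
The first summand is controlled by the strong Lyapunov hypothesis \eqref{DVt}, which bounds it by $-\kappa |\nabla V^{tr}||F^{tr}|$, while Cauchy--Schwarz gives $<\nabla V^{tr}, \tilde F> \le |\nabla V^{tr}||\tilde F|$. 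Hence
\[
\frac{d}{dt}V^{tr}(\vect{x}(t)) \le |\nabla V^{tr}|\,\bigl(|\tilde F| - \kappa|F^{tr}|\bigr),
\]
so the whole problem reduces to exhibiting a radius $R$ beyond which $|\tilde F(\vect{x})| < \kappa|F^{tr}(\vect{x})|$ on the cone ${\bf R}^n_{>}$; there $V^{tr}$ is strictly decreasing along the full flow.

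The key step, and the one I expect to be the main obstacle, is turning the componentwise degree assumption $\deg(F_i^{tr}) > \deg(\tilde F_i)$ into the Euclidean-norm domination $|\tilde F| \le \kappa|F^{tr}|$ for large $|\vect{x}|$. Writing $\vect{x} = \rho\,\vect{\omega}$ with $\vect{\omega}$ in the compact slice $\Sigma = \{|\vect{x}| = 1\} \cap {\bf R}^n_{>}$, each component satisfies $\tilde F_i(\rho\vect{\omega}) = O(\rho^{e_i})$ with $e_i = \deg \tilde F_i$, so with $d = \max_i \deg(F_i^{tr})$ the assumption forces $|\tilde F(\rho\vect{\omega})| = O(\rho^{d-1})$. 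The inequality then follows provided $|F^{tr}(\rho\vect{\omega})|$ grows like $\rho^{d}$ uniformly in $\vect{\omega}$, i.e. provided the top-degree homogeneous parts of the components of $F^{tr}$ do not vanish simultaneously anywhere on $\Sigma$; compactness of $\Sigma$ then yields a uniform $c_0 > 0$ with $|F^{tr}(\rho\vect{\omega})| \ge c_0 \rho^{d}$ for $\rho$ large. The delicate point is precisely this nonvanishing at infinity: the strong dissipativity \eqref{DVt} is vacuous at zeros of $F^{tr}$, so if the leading parts of $F^{tr}$ share a common zero on $\Sigma$ the perturbation $\tilde F$ can dominate along that direction and the conclusion may fail. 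Establishing (or assuming) the absence of such degenerate directions is the technical heart of the argument.

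Once $dV^{tr}/dt < 0$ is secured outside a ball $B_R$, I would conclude by the coercive-Lyapunov trapping argument already used in the preceding Lemma, adapted to the fact that dissipativity is only known outside $B_R$. Let $L = \max_{|\vect{x}| \le R,\ \vect{x} \in {\bf R}^n_{>}} V^{tr}(\vect{x})$ and $L' = \max_{|\vect{x}| \le \delta',\ \vect{x} \in {\bf R}^n_{>}} V^{tr}(\vect{x})$, both finite by continuity, and fix any $c > \max(L, L')$. By coercivity \eqref{VXt} the sublevel set $K = \{\vect{x} \in {\bf R}^n_{>} : V^{tr}(\vect{x}) \le c\}$ is bounded, and since $c > L$ its boundary lies entirely outside $B_R$, where the flow of the full system points strictly inward; hence $K$ is forward invariant. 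As $|\vect{x}(0)| < \delta'$ gives $V^{tr}(\vect{x}(0)) \le L' < c$, the trajectory starts in $K$ and never leaves it, so $|\vect{x}(t)| < C_0$ for all $t > 0$ with $C_0$ the radius of any ball containing $K$. This yields \eqref{VXtt}.
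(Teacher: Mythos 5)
Your proof is correct and follows essentially the same route as the paper's own: differentiate $V^{tr}$ along the full flow, split the derivative as $<\vect{\nabla} V^{tr},\vect{F}^{tr}> + <\vect{\nabla} V^{tr},\tilde{\vect{F}}>$, control the first term by the strong Lyapunov inequality \eqref{DVt} and the second by Cauchy--Schwarz, and close the argument once $|\tilde{\vect{F}}|<\kappa|\vect{F}^{tr}|$ holds outside a large ball. In fact you are more careful than the paper at precisely its two terse points: the paper deduces $|\tilde{\vect{F}}(\vect{x})|=o(|\vect{F}^{tr}(\vect{x})|)$ as $|\vect{x}|\to\infty$ directly from the componentwise degree hypothesis, without addressing the degenerate directions you flag (simultaneous vanishing of the leading parts of $\vect{F}^{tr}$ on the unit slice of the cone), and it concludes only informally that $|\vect{x}(t)|$ cannot increase to $+\infty$, a step which your coercive sublevel-set trapping argument makes rigorous.
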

 \begin{proof}
 Let us compute the derivative $dV^{tr}/dt$ on trajectories of the initial (non-tropicalized) system. We have the relation
 \begin{equation}
   dV^{tr}(\vect{x}(t))/dt= < \vect{\nabla } V^{tr}(\vect{x}(t)) , \vect{F}^{tr} (\vect{x}(t)) >  +  < \vect{\nabla } V^{tr}(\vect{x}(t)) ,  \tilde{\vect{F}}(\vect{x}(t) ) > .
 \label{est1}
\end{equation}

 Using the definition of strong Lyapunov functions,  from (\ref{est1}) one has
 \begin{equation}
   dV^{tr}(\vect{x}(t))/dt \le |\vect{\nabla} V^{tr}(\vect{x}(t))|(-\kappa | \vect{F}^{tr}(\vect{x}(t))| +  |\tilde{\vect{F}}(\vect{x}(t))|).
 \label{est2}
\end{equation}
 But for large $|\vect{x}|$ one has $|\tilde{\vect{F}}(\vect{x})| < \kappa | \vect{F}^{tr}(\vect{x}(t))|$, because
 $|\tilde{\vect{ F}}(x)| =o(|\vect{F}^{tr}(\vect{x})|),\, |\vect{x}| \to \infty$. Therefore,
 (\ref{est2}) gives then
 \begin{equation}
   dV^{tr}(\vect{x}(t))/dt \le 0.
 \label{est3}
\end{equation}
 This shows that $|\vect{x}(t)|$ cannot increase to $+\infty$, and finishes the proof.
\end{proof}

\section{Application to chemical reactions kinetics}

As an application, we discuss the Michaelis-Menten mechanism of catalysed reaction. This model
can be schematically
described as~:
\begin{equation}
S + E \underset{k_{-1}}{ \overset{k_{1}}{\rightleftharpoons}} ES  \overset{k_2}{\rightarrow}
P + E,
\notag
\end{equation}
where $S,E,ES,P$ represent the substrate, the enzyme, the enzyme-substrate complex and the product,
respectively.

The rate functions obey mass-action laws. We denote by $x=[S]$ and $y=[SE]$, the concentration of substrate and enzyme-substrate complex, respectively. The reaction mechanism has two conserved quantities
$e_0 = [E] + [ES]$, $s_0 = [S] + [ES] + [P]$. Using the conservation laws
we obtain the following reduced system:
\begin{eqnarray}
 x' & = -k_1 x(e_0-y) + k_{-1} y, \notag \\
 y' & = k_1 x (e_0 - y) - (k_{-1}+k_2)y.
\label{mm}
\end{eqnarray}
Let us consider that the initial data satisfies $0 \leq y(0) \leq e_0$, $0 \leq x(0) + y(0) \leq s_0$. Then,
from \eqref{mm} it follows that
\begin{equation}
0 \leq y \leq e_0, \, 0 \leq x+y \leq s_0,\, 0\leq x.
\label{c0}
\end{equation}

This type of constraints are typical for reduced systems resulting from ODE systems with conservation.

We solve now the tropical equilibration problem. Using rescaled variables $x= \bar x \epsilon^{a_1}$, $y= \bar y \epsilon^{a_2}$,
$k_1= \bar k_1 \epsilon^{\gamma_1}$, $k_{-1}= \bar k_{-1} \epsilon^{\gamma_{-1}}$, $e_0= \bar e_0 \epsilon^{\gamma_e}$, $s_0= \bar s_0 \epsilon^{\gamma_s}$,
\eqref{mm} becomes:
\begin{eqnarray}
 \bar x' & = - \bar k_1 \bar e_0 \epsilon^{\gamma_{1}+ \gamma_{e}} \bar x +
 \bar k_1 \epsilon^{\gamma_{1}+ a_2} \bar x \bar y  +
 \bar k_{-1} \epsilon^{\gamma_{-1}+ a_2 - a_1} \bar y, \notag \\
 \bar y' & = \bar k_1 \bar e_0 \epsilon^{\gamma_{1}+ \gamma_{e} + a_1 - a_2} \bar x -
 \bar k_1 \epsilon^{\gamma_{1}+ a_1} \bar x \bar y  -
 (\bar k_{-1} \epsilon^{\gamma_{-1}} + \bar k_2 \epsilon^{\gamma_{2}})\bar y.
\label{mmr}
\end{eqnarray}
The two tropical equilibration equations for $x$ and $y$ read:
\begin{eqnarray}
\gamma_{1}+ \gamma_{e} &= \min (\gamma_{1}+ a_{2},\gamma_{-1}+ a_2 - a_1), \label{eq1} \\
\gamma_{1}+ \gamma_{e} + a_1 - a_2 &= \min (\gamma_{1}+ a_1,\min(\gamma_{-1},\gamma_2)). \label{eq2}
\end{eqnarray}

We should add to these, the constraints \eqref{c0} imposed by the dynamics:
\begin{equation}
a_2 \geq \gamma_e, \, \min(a_1,a_2) \geq \gamma_s.
\end{equation}
We can distinguish between two situations.

Let us first consider that  $\gamma_{-1} < \gamma_2$. In this case \eqref{eq2} is equivalent to
\eqref{eq1} (it can be derived from the latter by adding $a_1 - a_2$ to both sides).
This situation corresponds to $k_{-1}$ much larger than $k_2$ and means that the enzyme-substrate
complex is recycled to a much larger extent than it is transformed into the reaction product.
We can find
two solutions for the tropical equilibration problem and two different tropically truncated systems (TTS).

The first solution demands large concentrations of substrate and corresponds to
saturation of the enzyme (saturation regime):
\begin{eqnarray}
&a_1  <   \gamma_{-1} - \gamma_{1}, \, a_2 = \gamma_e, \notag  \\
& \bar x'  = \epsilon^{\gamma_{1}+ \gamma_{e}} (-\bar k_1 \bar e_0 \bar x  + \bar k_{1} \bar x \bar y), \notag \\
& \bar y'  = \epsilon^{\gamma_{1}+ a_1}  (\bar k_1 \bar e_0 \bar x  -  \bar k_{1} \bar x \bar y).
\label{mm1}
\end{eqnarray}
The second solution works for small concentrations of substrate (linear regime):
\begin{eqnarray}
& a_1  >   \gamma_{-1} - \gamma_{1}, \, a_2 = a_1 + \gamma_e + \gamma_{1} - \gamma_{-1}, \notag \\
& \bar x'  = \epsilon^{\gamma_{1}+ \gamma_{e}} (-\bar k_1 \bar e_0 \bar x  + \bar k_{-1} \bar y), \notag \\
& \bar y'  = \epsilon^{\gamma_{-1}}  (\bar k_1 \bar e_0 \bar x  -  \bar k_{-1} \bar y).
\label{mm2}
\end{eqnarray}
In order to further characterize these two functioning regimes, we consider the third variable $\bar z=(x+y)\epsilon^{-\gamma_s}$. The choice of this variable is dictated by the TTS. In general,
conserved quantities of the TTS (total mass of fast cycles) can be slow variables of
the full system \cite{radulescu2012red}.
If this variable is slower than both $x$ and $y$, the regime is called quasi-equilibrium
\cite{gorban2009asymptotology,gorban2011MM}
and consists in rapid exchanges
between substrate and enzyme and a much slower transformation of the total mass $[S] + [SE]$ into $[P]$. In
both cases the equation for $\bar z$ reads:
\begin{equation}
 \bar z' = - \epsilon^{\gamma_{2}+a_2 - \gamma_s}  \bar k_2 \bar y.
\end{equation}
A sufficient condition for quasi-equilibrium (ensuring both $\gamma_{2}+a_2 - \gamma_s > max(\gamma_{1}+ \gamma_{e},\gamma_{1}+ a_1)$ and $\gamma_{2}+a_2 - \gamma_s > max(\gamma_{1}+ \gamma_{e},\gamma_{-1})$
in the first and second of the cases above, respectively)
 is $\gamma_2 > \gamma_1+\gamma_s$.

The second situation is when  $\gamma_{-1} > \gamma_2$.
This case leads to negligible recycling of the enzyme-substrate complex that
is rapidly transformed into reaction product. Quasi-equilibrium is no longer possible,
but we have another interesting equilibration corresponding to fast consumption
of one of the variables. The QSS variable is necessarily equilibrated and fast. The
remaining variable is slow.
This corresponds to the well known quasi-steady state (QSS) regime of
the Michaelis-Menten mechanism, first discussed by Briggs and Haldane \cite{gorban2009asymptotology,gorban2011MM,radulescu2012red}.

In this case \eqref{eq1},\eqref{eq2} are no
longer equivalent:
\begin{eqnarray}
\gamma_{1}+ \gamma_{e} &= \min (\gamma_{1}+ a_{2},\gamma_{-1}+ a_2 - a_1), \label{eq1bis} \\
\gamma_{1}+ \gamma_{e} + a_1 - a_2 &= \min (\gamma_{1}+ a_1,\gamma_2). \label{eq2bis}
\end{eqnarray}
We obtain four solutions to the tropical equilibration problem and four different truncated systems.
In three of these solutions, only one variable is equilibrated (see Table 1). The solutions 1 and 4 correspond to rapid complex consumption in saturated and linear regimes, respectively.
It is the case discussed by Briggs and Haldane. The solutions 2 and 3 correspond to very small concentrations of the substrate.
\begin{table}[h]
\centering
\caption{\small Tropical equilibrations of the Michaelis-Menten model with negligible recycling $\gamma_2 < \gamma_{-1}$. All these equilibrations have a geometrical interpretation illustrated in Figure 1.}
\begin{tabular}{l|l|l|l}
\hline
No & Condition & Truncated system & Regime \\
\hline
& &  &\\
1 & $a_1 <   \gamma_{2} - \gamma_{1}$
 & $x' = \epsilon^{\gamma_{1}+ \gamma_{e}} (-{\bar k}_1 \bar e_0 \bar x  + {\bar k}_{1} \bar x \bar y)$ & $y$ QSS if \\
& $a_2 = \gamma_e$ &
$y'  = \epsilon^{\gamma_{1}+a_1}  ({\bar k}_1 \bar e_0 \bar x  -  {\bar k}_{1} \bar x \bar y)$ & $a_1 < \gamma_e$  \\
\hline
& & & \\
2 & $\gamma_{2} - \gamma_{1} <  a_1   <   \gamma_{-1} - \gamma_{1}$ &
$x'  = \epsilon^{\gamma_{1}+ \gamma_{e}} (-{\bar k}_1 \bar e_0 \bar x  + {\bar k}_{1} \bar x \bar y)$
& $x$ QSS if \\
&  $a_2 = \gamma_e$ & $y'  = - \epsilon^{\gamma_{2}} {\bar k}_{2} \bar y$  &
$\gamma_2 > \gamma_1+\gamma_e$
 \\
\hline
& & &\\
3 & $a_1   >   \gamma_{-1} - \gamma_{1}$ &
 $x'  = \epsilon^{\gamma_{1}+ \gamma_{e}} (-\bar k_1 \bar e_0 \bar x  + \bar k_{-1} \bar y)$
& $x$ QSS if   \\
& $a_2 = a_1 + \gamma_e + \gamma_{1} - \gamma_{-1}$ & $y'  = - \epsilon^{\gamma_{2}} \bar k_{2} \bar y$ &
$\gamma_2 > \gamma_1+\gamma_e$
\\
\hline
& & &\\
4 & $a_1   >   \gamma_{2} - \gamma_{1}$
& $x'  = - \epsilon^{\gamma_{1}+ \gamma_{e}} \bar k_1 \bar e_0 \bar x$ &
$y$ QSS if \\
& $a_2 = a_1 + \gamma_e + \gamma_{1} - \gamma_{2}$ &  $y'  = \epsilon^{\gamma_{2}}  (\bar k_1 \bar e_0 \bar x  -  \bar k_{2} \bar y)$ &
$\gamma_2 < \gamma_1+\gamma_e$
\\
\hline
\end{tabular}
\end{table}
\begin{figure}[h!]
\centerline{
\includegraphics[width=150mm]{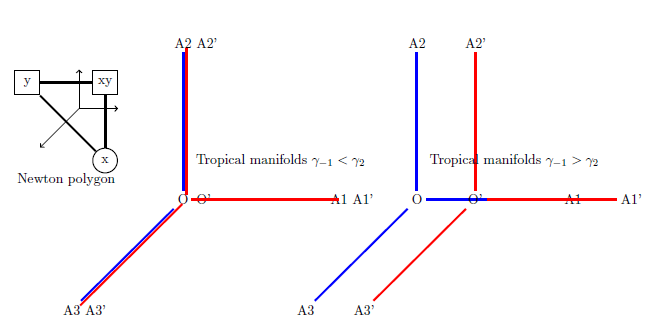}}
\caption{\small
Newton polygon and tropical manifolds for $x$ (in blue) and for $y$ (in red)
variables of the Michaelis-Menten model.
If $\gamma_{-1} < \gamma_2$ the two manifolds
coincide in the limit $\epsilon \to 0$, meaning that both variables are equilibrated.
The vertices of the Newton polygons correspond to monomial terms in the ODEs
(different vertex shapes mean different signs of the monomials).
Only two edges of the Newton polygon relates vertices of opposite signs,
which means that there are two equilibrations; these correspond
to the branches $OA_3 \equiv O'A_3$ and $OA_1 \equiv O'A'_1$ of the tropical manifolds.
If $\gamma_{-1} > \gamma_2$, the two tropical manifolds for $x$ and for $y$ share a common
half-line, but no longer coincide. This leads to four possible equilibrations
as in Table 1: $O'A'1$ (solution 1), $OO'$ (solution 2), $OA_3$ (solution 3),
and $O'A'_3$ (solution 4).}
\label{fig_MM_tropical}
\end{figure}

\section{Conclusion}
Tropical analysis provides useful tools for understanding the dynamics
of biochemical networks.
In this paper we have studied the simple example of an enzymatic reaction, but
some other applications have been discussed elsewhere, see \cite{SASB2011,savageau2009phenotypes}.
We have shown that depending on the values of the parameters and concentrations,
biochemical networks with multiple time scales can have several asymptotic regimes.
During such regimes, the dynamics can be approximated by truncated systems obtained
by tropicalization of the ordinary differential equations describing the chemical
kinetics. Tropical geometry can guide the construction of such truncated systems.
An important step in this construction is the calculation of tropical equilibrations
leading to slow invariant manifolds. We showed that there is one to one correspondence
between tropical equilibrations and well defined parts of the tropical manifolds
of the polynomials defining the ordinary differential equations. In the future,
effective algorithms will be needed for the tropical equilibration problem, whose
complexity is exponential in the number of variables. This will be essential for
large scale applications in systems biology. Also, our methods will be generalized to
include the case  when the tropically truncated fast subsystem has conservation laws,
when aggregated slow variables are needed.

\section*{Acknowledgements}
D.G. is grateful to the Max-Planck Institut f\"ur Mathematik, Bonn for its hospitality
and to  Labex CEMPI (ANR-11-LABX-0007-01). O.R. gratefully acknowledges
financial support from a CNRS/INRIA/INSERM grant (PEPS BMI) and from ANR
(project Biotempo). We are thankful for useful comments from an anonymous Referee.

\newcommand{\etalchar}[1]{$^{#1}$}
\providecommand{\bysame}{\leavevmode\hbox to3em{\hrulefill}\thinspace}
\providecommand{\MR}{\relax\ifhmode\unskip\space\fi MR }
\providecommand{\MRhref}[2]{%
  \href{http://www.ams.org/mathscinet-getitem?mr=#1}{#2}
}
\providecommand{\href}[2]{#2}

\end{document}